\newcommand{\casefmt}[1]{\vspace{1mm} \noindent \underline{#1}}
\newcommand{\newparentheses}[3]{%
  \expandafter\newcommand\csname #1\endcsname[1]{#2##1#3}%
  \expandafter\newcommand\csname #1L\endcsname[1]{\bigl#2##1\bigr#3}%
  \expandafter\newcommand\csname #1XL\endcsname[1]{\Bigl#2##1\Bigr#3}%
  \expandafter\newcommand\csname #1V\endcsname[1]{\left#2##1\right#3}}
\newcommand{\onenewattribute}[3]{%
  \@ifundefined{#1}{\let\@@def\newcommand}{\let\@@def\renewcommand}%
  \expandafter\@@def\csname #1\endcsname[2][]{%
    \ifthenelse{\equal{##1}{}}%
    {#2\csname #3\endcsname{##2}}%
    {#2_{##1}\csname #3\endcsname{##2}}}}
\newcommand{\newattribute}[2]{%
  \onenewattribute{#1}{#2}{parens}%
  \onenewattribute{#1L}{#2}{parensL}%
  \onenewattribute{#1XL}{#2}{parensXL}%
  \onenewattribute{#1V}{#2}{parensV}}
\newcommand{\subtree}[2][]{%
  \ifthenelse{\equal{#1}{}}%
  {T(#2)}%
  {#1(#2)}}
\newcommand{\induced}[2][]{%
  \ifthenelse{\equal{#1}{}}%
  {T|#2}%
  {#1|#2}}
\newcommand{\notarxiv}[1]{}
\newcommand{\eat}[1]{}
\newtheorem{theorem}{Theorem}
\newtheorem{definition}{Definition}
\newtheorem{obs}{Observation}
\declaretheorem[name=Observation,sibling=obs]{re-obs}
\declaretheorem[name=Lemma,sibling=lemma]{re-lemma}
\declaretheorem[name=Corollary,sibling=corollary]{re-cor}
\begin{document}

\title{Chain Reduction Preserves the Unrooted Subtree Prune-and-Regraft Distance}

\author{Chris Whidden and Frederick A. Matsen IV}

\address{Program in Computational Biology \\
Fred Hutchinson Cancer Research Center \\
Seattle, WA, USA  98109\\
Tel.: +1-206-667-1311\\
Fax: +1-206-667-2437\\
}

\email{\{cwhidden,matsen\}@fredhutch.org}

\thanks{
This work was funded by National Science Foundation award 1223057 and 1564137.
Chris Whidden is a Simons Foundation Fellow of the Life Sciences Research Foundation.
The research of Frederick Matsen was supported in part by a Faculty Scholar grant from the Howard Hughes Medical Institute and the Simons Foundation.
}


\begin{abstract}
	\normalsize
The subtree prune-and-regraft (SPR) distance metric is a fundamental way of comparing evolutionary trees.
It has wide-ranging applications, such as to study lateral genetic transfer, viral recombination, and Markov chain Monte Carlo phylogenetic inference.
Although the rooted version of SPR distance can be computed relatively efficiently between rooted trees using fixed-parameter-tractable algorithms, in the unrooted case previous algorithms are unable to compute distances larger than 7.
One important tool for efficient computation in the rooted case is called \emph{chain reduction}, which replaces an arbitrary chain of subtrees identical in both trees with a chain of three leaves.
Whether chain reduction preserves SPR distance in the unrooted case has remained an open question since it was conjectured in 2001 by Allen and Steel, and was presented as a challenge question at the 2007 Isaac Newton Institute for Mathematical Sciences program on phylogenetics.

In this paper we prove that chain reduction preserves the unrooted SPR distance.
We do so by introducing a structure called a socket agreement forest that restricts edge modification to predetermined socket vertices, permitting detailed analysis and modification of SPR move sequences.
This new chain reduction theorem reduces the unrooted distance problem to a linear size problem kernel, substantially improving on the previous best quadratic size kernel.
\end{abstract}

\keywords{phylogenetics, subtree prune-and-regraft distance, lateral gene transfer, agreement forest, discrete optimization}

\maketitle

\thispagestyle{empty}
\clearpage
\pagebreak


\emergencystretch=1em

\section{Introduction}
\label{sec:intro}

\setcounter{page}{1}

Molecular phylogenetic methods reconstruct evolutionary trees (a.k.a\ phylogenies) from DNA or RNA data and are of fundamental importance to modern biology~\cite{hillis96}.
Phylogenetic inference has numerous applications including investigating organismal relationships (the ``tree of life''~\cite{koonin2015turbulent}), reconstructing virus evolution away from innate and adaptive immune defenses~\cite{castro2012evolution}, analyzing the immune system response to HIV~\cite{haynes2012b}, designing genetically-informed conservation measures~\cite{helmus2007phylogenetic}, and investigating the human microbiome~\cite{matsen2015phylogenetics}.
Although the molecular evolution assumptions may differ for these different settings, the core algorithmic challenges remain the same: reconstruct a tree graph representing the evolutionary history of a collection of evolving units, which are abstracted as a collection of \emph{taxa} where each \emph{taxon} is associated with a DNA, RNA, or amino acid sequence.

Phylogenetic study often requires an efficient means of comparing phylogenies in a meaningful way.
For example, different inference methods may construct different phylogenies and it is necessary to determine to what extent they differ and, perhaps more importantly, which specific features differ between the trees.
In addition, the evolutionary history of individual genes does not necessarily follow the overall history of a species due to \emph{reticulate} evolutionary processes: lateral genetic transfer, recombination, hybridization, and incomplete lineage sorting~\cite{galtier2008dealing}.
Comparisons of inferred histories of genes to each other, a reference tree, or a proposed species tree may be used to identify reticulate events~\cite{beiko2005highways,whidden2014supertrees} and distance measures between phylogenies may be used to optimize summary measures such as supertrees~\cite{pisani2007supertrees,Steel2008-pn,bansal2010robinson,whidden2014supertrees}.

Numerous distance measures have been proposed for comparing phylogenies.
The Robinson-Foulds distance~\cite{robinson81} is perhaps the most well known and can be calculated in linear time~\cite{day85}.
However, the Robinson-Foulds distance has no meaningful biological interpretation or relationship to reticulate evolution.
Typically, distance metrics are either easy to compute but share this lack of biological relation, such as the quartet distance~\cite{brodal2004computing} and geodesic distance~\cite{owen2011fast}, or are difficult to compute such as the hybridization number~\cite{baroni05} and maximum parsimony distance~\cite{Bruen2008-wx,kelk2014complexity,moulton2015parsimony}.

The subtree prune-and-regraft (SPR) distance is widely used due to its biological interpretability despite being difficult to compute~\cite{baroni05,beiko2006phylogenetic}.
SPR distance is the minimum number of lateral gene transfer events required to transform one tree into the other (Figure~\ref{fig:spr}); it provides a lower bound on the number of reticulation events required to reconcile two phylogenies.
As such, it has been used to model reticulate evolution~\cite{maddison97,nakhleh05}.
In addition, the SPR distance is a natural measure of distance when analyzing phylogenetic inference methods which typically apply SPR operations to find maximum likelihood trees~\cite{Price2010-fi,Stamatakis2006-yz} or estimate Bayesian posterior distributions with SPR-based Metropolis-Hastings random walks~\cite{Ronquist2012-hi,bouckaert2014beast}.
Similar trees can be easily identified using the SPR distance, as random pairs of $n$-leaf trees differ by an expected $n - \Theta(n^{2/3})$ SPR moves~\cite{atkins2015extremal}.
This difference approaches the maximum SPR distance of $n - 3 - \floorV{(\sqrt{n-2} -1)/2}$~\cite{Ding2011-bj} asymptotically.
The topology-based SPR distance is especially appropriate in this context as topology changes have been identified as the main limiting factor of such methods~\cite{lakner2008efficiency,hohna2012guided,whidden2015quantifying}.
Moreover, the SPR distance has close connections to network models of evolution~\cite{baroni05,bordewich07,nakhleh05}.

Although it has these advantages, the SPR distance between both rooted and unrooted trees is NP-hard to compute~\cite{bordewich2005computational,hickey2008spr}, limiting its utility.
Despite the NP-hardness of computing the SPR distance between rooted phylogenies, recent algorithms can rapidly compare trees with hundreds of leaves and SPR distances of 50 or more in fractions of a second~\cite{whidden2010fast,whidden2014supertrees}.
This has enabled use of the SPR distance for inferring phylogenetic supertrees and lateral genetic transfer~\cite{whidden2014supertrees}, comparing influenza phylogenies to assess reassortment~\cite{dudas2014reassortment}, and investigating mixing of Bayesian phylogenetic posteriors~\cite{whidden2015quantifying,whidden2015ricci}.

SPR distances can be computed efficiently in practice for rooted trees by computing a maximum agreement forest (MAF) of the trees~\cite{hein96,allen01}.
Due to this MAF framework, the development of efficient fixed-parameter and approximation algorithms for SPR distances between rooted trees has become an area of active research~\cite{beiko2006phylogenetic,wu2009practical,bonet2009efficiently,whidden2013fixed,shi2014improved,chen2013faster} (see ~\cite{whidden2013fixed} for a more complete history), including recent extensions to nonbinary trees~\cite{whidden2015multifurcating,chen2015parameterized}, and generalized MAFs of multiple trees~\cite{shi2014approximation}.

Reduction rules form a key step of these fixed parameter-algorithms, including the subtree reduction rule~\cite{allen01}, chain reduction rule~\cite{allen01}, and cluster reduction rule~\cite{linz2011cluster}.
The first two optimizations collectively reduce the size of the compared trees to a linear-size problem kernel with respect to their SPR distance, while the third partitions the trees into smaller subproblems that can be considered independently.
These optimizations greatly reduce the search space that must be considered during the MAF search.

Most phylogenetic inference packages today use reversible mutation models to infer unrooted trees, motivating SPR calculation for unrooted trees.
However, unrooted SPR (uSPR) currently lacks some of the reduction rules available in the rooted case, making it difficult to analyze properties of the distance.
In particular, although the subtree reduction rule is applicable to the unrooted case, only a weaker version of the chain reduction rule, the $9k$-chain reduction rule~\cite{bonet2010complexity} has been proven to preserve the uSPR distance.
Correspondingly, a quadratic size kernel is the current state of the art for computing the uSPR distance, in contrast to the linear-size kernel for rooted SPR.
This kernel is sufficient for proving that the problem is fixed-parameter tractable, but does not make for a practical foundation on which to develop an efficient algorithm.
In part because of the lack of such a reduction rule, the best previous algorithm for computing the SPR distance between unrooted trees, due to Hickey et al.~\cite{hickey2008spr}, cannot compute distances larger than 7 or reliably compare trees with more than 30 leaves.

In this paper, we substantially advance understanding of and computational algorithms for the unrooted SPR distance.
We first introduce a new concept of socket agreement forests (SAFs) that allow us to look at uSPR paths in a new way and define equivalences between these paths,
Then, building on previous work by Hickey et al.\ \cite{hickey2008spr} and Bonet and St. John \cite{bonet2010complexity}, we prove the 2001 conjecture that the chain reduction preserves the uSPR distance, reducing the uSPR distance problem to a linear size problem kernel.

\section{Preliminaries}
\label{sec:prelim}

\begin{figure}[t]
	\subfigure[\label{fig:x-tree}]{\includegraphics[scale=1.2]{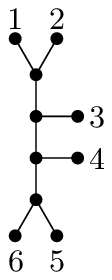}}
	\hspace*{\stretch{1}}
	\subfigure[\label{fig:subtree}]{\includegraphics[scale=1.2]{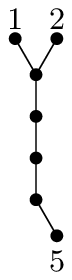}}
	\hspace*{\stretch{1}}
	\subfigure[\label{fig:induced}]{\includegraphics[scale=1.2]{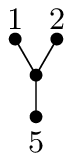}}
	\hspace*{\stretch{2}}
	\subfigure[\label{fig:spr}]{\includegraphics[scale=1.2]{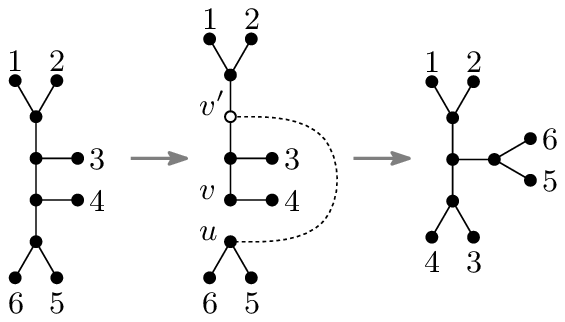}}

	\caption{(a) An unrooted $X$-tree $T$.
		(b) $T(V)$, where $V = \set{1,2,5}$.
		(c) $T|V$.
		(d) An SPR operation transforms $T$ into a new tree by \emph{pruning a subtree} and \emph{regrafting} it in another location.
	}
	\label{fig:trees}
\end{figure}

Nodes (i.e. vertices) of a tree graph with one neighbor are called \emph{leaves} and nodes with three neighbors are called \emph{internal nodes}.
An (unrooted binary phylogenetic) \emph{$X$-tree} is a tree $T$ whose nodes each have one or three neighbors, and such that the leaves of $T$ are bijectively labeled with the members of a label set $X$.
$T(V)$ is the unique subtree of $T$ with the fewest nodes that connects all nodes in $V$.
\emph{Suppressing} a node $v$ deletes $v$ and its incident edges; if $v$ is of degree 2 with neighbors $u$ and $w$, $u$ and $w$ are reconnected using a new edge $(u,w)$
The $V$-tree \emph{induced} by $T$ is the unique smallest tree $T|V$ that can be obtained from $T(V)$ by suppressing unlabeled nodes with fewer than three neighbours.
See Figure~\ref{fig:trees}.


An \emph{unrooted $X$-forest} $F$ is a collection of (not necessarily binary) trees $T_1, T_2, \ldots,\allowbreak T_k$ with respective label sets $X_1, X_2, \ldots, X_k$ such that $X_i$ and $X_j$ are disjoint, for all $1 \le i \ne j \le k$, and $X = X_1 \cup X_2 \cup \ldots \cup X_k$.
We say $F$ \emph{yields} the forest with components $T_1|X_1$, $T_2|X_2$, \ldots, $T_k|X_k$, in other words, this forest is the smallest forest that can be obtained from $F$ by suppressing unlabeled nodes with less than three neighbours.
For an edge set $E$, $F-E$ denotes the forest obtained by deleting the edges in $E$ from $F$ and $F \div E$ the yielded forest.
We say that $F \div E$ is \emph{a forest of $F$}.

A \emph{subtree-prune-regraft} (SPR) operation on an unrooted $X$-tree $T$ cuts an edge $e = (u,v)$.
This divides $T$ into subtrees $T_u$ and $T_v$, containing $u$ and $v$, respectively.
Then it introduces a new node $v'$ into $T_v$ by subdividing an edge of $T_v$, and adds an edge $(u,v')$.
Finally, $v$ is suppressed (Figure~\ref{fig:spr}).
In the following we assume all trees are unrooted unless otherwise stated.


We often consider a sequence of operations applied to a tree $T_1$ that result in a tree $T_2$.
These operations can be thought of as ``moving'' between trees and are also referred to as \emph{moves} (e.g. an SPR move).
A sequence of moves $M = m_1, m_2, \ldots, m_d$ applied to $T_1$ result in the sequence of trees $T_1 = t_0, t_1, t_2, \ldots, t_d = T_2$.
We call such sequences of trees a \emph{path} (e.g. an SPR path).

When considering how the tree changes throughout such sequences, it is often helpful to consider how nodes and edges of the tree change.
Formally, we construct a mapping $\varphi_{i,j}$ that maps nodes and edges of $t_i$ to $t_j$.
Each mapping $\varphi_{i,i+1}$ between adjacent trees is constructed according to the corresponding move $m_{i+1}$: nodes and edges of $t_i$ that are not modified by $m_{i+1}$ are mapped to the corresponding nodes and edges of $t_{i+1}$.
The deleted edge $(u,v)$ of $t_i$ is mapped to the newly introduced edge of $t_{i+1}$ (e.g. $(u,v')$ for an SPR move).
Deleted nodes are mapped to $\emptyset$.
Forward mappings $\varphi_{i,j}$, $i < j$, are constructed transitively.
Reverse mappings $\varphi_{j,i}$, $i < j$, are constructed analogously by considering the moves that construct the reverse sequence $t_d, t_{d-1}, \ldots, t_0$.

We will use these mappings implicitly to talk about how a tree changes throughout a sequence of moves.
With these mappings we can consider SPR tree moves as changing the endpoints of edges rather than deleting one edge and introducing another.
We say that an edge is \emph{broken} if one of its endpoints is moved by a rearrangement operation.

SPR operations give rise to a distance measure $\dspr{\cdot,\cdot}$ between $X$-trees, defined as the minimum number of SPR operations required to transform one tree into the other.
The trees in Figure~\ref{fig:three-spr}, for example, have SPR distance $\dspr{T_1,T_2} = 3$.
We will refer to a minimum-length path of SPR moves between two trees as an optimal SPR path.



Given trees $T_1$ and $T_2$ and forests $F_1$ of $T_1$ and $F_2$ of $T_2$, a forest $F$ is an \emph{agreement forest} (AF) of $F_1$ and $F_2$ if it is a forest of both forests.
$F$ is a \emph{maximum agreement forest} (MAF) if it has the smallest possible number of components.
We denote this number of components by $m(F_1, F_2)$.
For two unrooted trees $T_1$ and $T_2$, Allen and Steel~\cite{allen01} showed that the TBR distance is $m(T_1, T_2) - 1$.
Figure~\ref{fig:maf} shows an MAF of the trees in Figure~\ref{fig:three-spr}.

\section{Socket Agreement Forests}
\label{sec:sockets}

In this section we introduce a new type of agreement forest, socket agreement forests (SAFs).
SPR operations on general trees are difficult to analyze because they remove and introduce internal nodes.
SAFs solve this difficulty by including a finite set of predetermined \emph{sockets} which are the only nodes that can be involved in SPR operations and are never deleted or introduced.
Due to this fixed nature, SAFs are unsuitable for enumeration or determining a distance metric directly.
Instead, SAFs allow us to identify properties of independence with respect to optimal SPR paths and determine cases where such an optimal path can be modified to obtain a different optimal SPR path.
Proofs of the Observations, Corollary, and Lemma in this section can be found in the Appendix.
We use these concepts in Section~\ref{sec:chain-reduction} to prove that the chain reduction preserves the SPR distance between unrooted trees.

Define a \emph{socket forest} to be a collection of unrooted trees with special nodes, called \emph{sockets}.
Socket forests have special edges called \emph{connections} that must be between two sockets.
A collection of them is a \emph{connection set}.
Connections are not allowed to connect a socket to itself, although multiple connections to the same socket are allowed.

We define the \emph{underlying} forest of a socket forest $F$ to be the forest $F^*$ obtained from $F$ by deleting all connections and suppressing all unconnected sockets.
We will say that a socket forest $F$ \emph{permits an unrooted tree} $T$ if it is possible to add edges between the sockets of $F$, resolve any multifurcations in some way, and suppress unconnected sockets to obtain $T$.
Moreover, we say that a socket forest $F$ \emph{permits an SPR path} if each intermediate tree along the path is permitted by $F$.
Given two trees $T_1$ and $T_2$, a \emph{socket agreement forest} (SAF) is a socket forest that permits both $T_1$ and $T_2$.
Note that the underlying forest of an SAF is an AF of $T_1$ and $T_2$.

Because we will need to be precise concerning ways that connections are changing in socket forests, we offer the following clarifications.
First, because each socket is separately identified (e.g.\ with a numbering), any connection can be described irrespective of the other connections in a socket forest.
As with SPR moves on general trees, we consider the deletion and insertion of a connection as simply changing the endpoint of the connection.
As such, the ``new'' connection maintains the same identifier.
Thus, we can identify changes in a connection by the changes in the sockets it connects, again irrespective of the other connections in a socket forest.
The fact that connections are well defined irrespective of other connections implies that there is a well defined notion of equivalence of moves: two moves are equivalent if they both attach a given endpoint of the same connection to the same socket.

We will also use the following terminology.
A \emph{panel} is a component of a socket agreement forest.
A \emph{singleton panel} is a panel with one socket.
An \emph{SPR move} for a given connection set is the replacement of one connection in a connection set for another that does not introduce cycles.
We will denote SPR moves that replace a connection $c = (u,v)$ with a connection $(u,v')$ by $(u,v) \rightarrow (u,v')$ for short.
We say that this move \emph{breaks} the connection $c$.
Again, we can uniquely describe this move as changing the second endpoint of connection $c$ to socket $v'$, regardless of the current state of the socket forest.
An SPR move attaching to socket $v$ is \emph{terminal} for a given sequence of moves if subsequent moves maintain the connection endpoint attached to $v$.

Let $M = m_1, m_2, \ldots, m_k$ be an optimal sequence of SPR moves transforming tree $T_1$ into tree $T_2$ via a socket forest $F$.
We will often consider the sequence of trees $T_1, t_1, t_2, \ldots, t_k = T_2$ induced by these moves, that is the sequence of trees obtained by applying $M$ to a fully-connected socket forest configuration of $F$ that permits $T_1$ and results in $T_2$.
We say that two such trees are \emph{equivalent} if they are both permitted by the same binary phylogenetic tree.
In this way we can discuss sockets and panels in the trees, as shorthand for the sockets and panels in the socket forest configurations that correspond to each tree.

We say that two SPR moves $m_i$ and $m_j$, $i \neq j$ in such an optimal path are \emph{independent} if there exists another optimal sequence of SPR moves transforming $T_1$ into $T_2$ such that equivalent moves to $m_j$ and $m_i$ occur in a different order.
In contrast, connections to panels with multiple sockets may form cycles depending on the order of the moves.

\begin{restatable}{re-obs}{moveleaf}
	\label{obs:move-leaf}
	An SPR move that breaks an edge connected to a singleton panel is independent of any other SPR move in an optimal SPR path.
\end{restatable}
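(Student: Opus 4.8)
The plan is to show that an SPR move $m_i = (u,v) \to (u,v')$ in which $v$ is the unique socket of a singleton panel $P$ can be commuted past any other move in an optimal path, in either direction, without changing the endpoints reached by all other moves. The key structural fact is that a singleton panel contributes essentially nothing to the connectivity constraints: since $P$ has exactly one socket, every connection incident to $P$ is incident to the same socket $v$, and because connections never join a socket to itself, all such connections go from $v$ to sockets in other panels. Consequently, breaking or reattaching a connection at $v$ can never create or destroy a cycle passing through $P$ — a cycle through $P$ would have to enter and leave via $v$, which it cannot do on two distinct connection-edges without repeating the vertex $v$. So the only cycle-validity constraints that matter for $m_i$ are those on the panel containing $v'$ (and, for the reverse direction, the panel that previously contained the other endpoint of $c$).

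First I would fix an arbitrary other move $m_j$ in the optimal sequence $M = m_1,\dots,m_k$ and argue we may swap $m_i$ and $m_j$ when they are adjacent ($j = i+1$ or $j = i-1$); the general case then follows by repeatedly commuting $m_i$ toward $m_j$, noting that each intermediate swap is itself a swap of $m_i$ with an adjacent move. For an adjacent swap, I would split into cases according to whether $m_j$ touches socket $v$ or the target socket $v'$. If $m_j$ is disjoint from $\{v, v'\}$ (and from the other sockets named in $m_i$), the two moves act on disjoint connection-endpoints and the intermediate trees obtained in the two orders are literally the same socket-forest configurations, hence equivalent; the only thing to check is that reordering does not transiently create a cycle, and by the observation of the previous paragraph no reordering involving a connection at the singleton socket $v$ can do so, while $m_j$'s validity is unaffected by the state of a connection hanging off $v$. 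If $m_j$ shares a socket with $m_i$, I would use the fact that connections are identified independently of one another (stated in the preamble): the move "attach endpoint of $c$ to $v'$" is well defined regardless of what $m_j$ does, so after performing $m_j$ first, the analogue of $m_i$ still attaches the same endpoint of the same connection $c$ to the same socket $v'$, and vice versa. In each case the resulting sequence is again optimal (it has the same length $k$) and realizes equivalent moves to $m_i$ and $m_j$ in the opposite order, which is exactly the definition of independence.

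I expect the main obstacle to be the bookkeeping in the shared-socket cases, specifically making sure that after swapping, the path is still a valid SPR path through a socket forest permitting the same $T_1$ and $T_2$ — i.e., that no intermediate configuration has a cycle and that the endpoints of every connection other than $c$ are untouched. The cycle check is where the singleton hypothesis does all the work: I would phrase it as a short lemma that any cycle in a connection set must lie within a single panel's connections together with the paths through that panel, so a panel with one socket can never be part of a cycle, making the move $m_i$ "always valid" from $P$'s side and its removal "never the cause" of a cycle elsewhere. Handling the reverse direction of the path (needed because independence is symmetric and the argument should not privilege the forward orientation) is then immediate by running the same case analysis on the reversed sequence, using the reverse mappings $\varphi_{j,i}$ already set up in the preliminaries.
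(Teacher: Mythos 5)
Your overall strategy (commute the singleton-panel move past the others) is in the same spirit as the paper's proof, which slides $m_i$ all the way to the front or the back of the sequence, but there are two genuine problems. First, you have the move oriented backwards: in $m_i = (u,v) \rightarrow (u,v')$ the endpoint that is detached is $v$, so if $v$ were the unique socket of the singleton panel the move would leave that panel (and its leaves) disconnected, and the intermediate configuration would not be a tree at all. The paper's first step is exactly the opposite observation: since every intermediate configuration must be a tree, the singleton socket must be the \emph{stationary} endpoint $u$, so the move merely relocates a pendant subtree. Because of this, your ``no cycle can pass through $P$'' lemma is aimed at the wrong target. The validity question for an SPR move is whether the reattachment socket lands on the correct side of the cut; the relevant fact, which the paper uses, is that after cutting the singleton panel's connection the component of the stationary endpoint is just the pendant panel, so reattaching it anywhere else never creates a cycle (``valid in any context''). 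As stated, your lemma is also not correct on its own terms: a cycle can pass through a single socket along two distinct connections incident to it without repeating any vertex.

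Second, and more importantly, your adjacent-swap argument silently assumes that no other move of $M$ breaks the same connection $c$. If some $m_j$ also moves the endpoint of $c$ (i.e.\ the singleton panel is moved a second time), the two moves do not commute: performing the equivalent moves in the opposite order leaves that endpoint at the wrong socket and changes the final tree, and your shared-socket case --- which appeals only to moves being well defined irrespective of one another --- does not repair this. This is exactly where optimality has to enter, and your proposal never uses it in any essential way. The paper closes this gap by arguing that, because the singleton-panel move is valid in any context and does not obstruct later moves, an optimal path can break that connection at most once (otherwise delete the earlier such move and obtain a strictly shorter path to an equivalent tree); only with that in hand can $m_i$ be placed first or last while leaving every other move unchanged, which yields independence from all other moves at once. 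Without the at-most-once argument (or a substitute for it), your swapping scheme does not establish the claim. A smaller slip: after swapping two adjacent moves, the configuration strictly between them generally differs in the two orders; only the configuration after both moves is the same.
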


We next observe that modifying a terminal SPR move to use a different socket in the same panel of an underlying AF creates a new sequence of SPR moves that results in the same tree other than the modified connection.
Let $M = m_1, m_2, \ldots, m_k$ be a sequence of SPR moves transforming tree $T_1$ into tree $T_2$, $F$ a socket agreement forest that permits $M$, and $F^*$ the AF underlying $F$.
\begin{obs}
	\label{obs:modify-terminal}
	If $m_i = (u,v) \rightarrow (u,v')$ is a terminal move of $M$ and the component of $F^*$ containing $v'$ also contains a socket $v''$, then $M' = m_1, m_2, \ldots, m_{i-1}, m'_i,\allowbreak m_{i+1}, m_{i+2}, \ldots, m_k$ is a valid sequence of SPR moves, where $m'_i = (u,v) \rightarrow (u,v'')$.
\end{obs}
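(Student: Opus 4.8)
The plan is to argue that replacing the target socket of a \emph{terminal} move by a sibling socket in the same component of the underlying AF produces a legal SPR-move sequence, and that the resulting tree differs from the original only in the placement of the single connection that was modified. First I would fix notation: write $M = m_1, \ldots, m_k$ with induced tree sequence $T_1 = t_0, t_1, \ldots, t_k = T_2$ on socket forest $F$, and let $c$ denote the connection broken by $m_i$, which at step $i$ is reattached from $v$ to $v'$. Because $m_i$ is terminal, every subsequent move $m_{i+1}, \ldots, m_k$ leaves this endpoint of $c$ attached to $v'$; in particular $c$ is never broken again, so the moves $m_{i+1}, \ldots, m_k$ are defined purely in terms of the \emph{other} connections of $F$, and their legality (the no-cycle condition) is unaffected by whether $c$ ends at $v'$ or at $v''$, since $v'$ and $v''$ lie in the same panel (component of $F$), hence contribute the same connectivity to any cycle argument. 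This is the crux of why substitution is harmless.

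The key steps, in order, are: (1) verify that $m_i' = (u,v) \rightarrow (u,v'')$ is itself a legal SPR move on the connection set present just before step $i$ — this holds because that connection set is identical to the one faced by $m_i$, and attaching $c$ to $v''$ rather than $v'$ cannot create a cycle, as $v'$ and $v''$ are in the same panel of the underlying AF, so any cycle through $v''$ would correspond to a cycle through $v'$, contradicting legality of $m_i$; (2) show inductively that for each $j \ge i$, the socket forest configuration after applying $m_1, \ldots, m_{i-1}, m_i', m_{i+1}, \ldots, m_j$ is obtained from the configuration after $m_1, \ldots, m_j$ by moving the $v'$-endpoint of $c$ to $v''$ and nothing else — the base case $j=i$ is immediate, and the inductive step uses that $m_{j+1}$ acts on connections other than $c$ (by terminality of $m_i$) and on panels whose socket structure is unchanged by the substitution, so $m_{j+1}$ remains legal and has the same effect; (3) conclude that $M'$ is a valid sequence of $k$ SPR moves, with final configuration differing from $T_2$'s only in the $v' \leftrightarrow v''$ relocation of $c$.

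The main obstacle I anticipate is step (2), specifically checking that the no-cycle condition for each later move $m_{j+1}$ is genuinely preserved under the substitution. One must be careful that although $v'$ and $v''$ are in the same component of the \emph{underlying AF} $F^*$, the socket forest $F$ itself may route connections through that component in intricate ways, so ``same component of $F^*$'' must be translated into the correct statement about cycles in the connection graph of $F$. The honest version of the argument is that a cycle in a connection set is a cycle in the multigraph whose vertices are panels and whose edges are connections (plus the within-panel connectivity, which is fixed); since $v'$ and $v''$ belong to the same panel, connection $c$ joins the same pair of panels whether it ends at $v'$ or $v''$, so the cycle structure of every intermediate connection set is literally unchanged by the substitution. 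Once this reformulation is in place, legality of every move in $M'$ follows from legality of the corresponding move in $M$, and the rest is bookkeeping with the mappings $\varphi$.
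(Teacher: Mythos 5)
Your proof is correct and takes essentially the same route as the paper, which treats this observation as immediate and proves its non-terminal analogue (Corollary~\ref{cor:modify-nonterminal}) by exactly the device you spell out: the intermediate configurations differ only in which of the two same-panel sockets $v'$, $v''$ carries the moved endpoint of $c$, and since connections join panels and each panel is a tree, acyclicity---hence legality of every subsequent move---is decided at the level of the panel multigraph and is unchanged by the substitution. One small caution: terminality only guarantees that the endpoint attached to $v'$ is never moved again, not that $c$ is never broken at its other end, but your panel-contraction reformulation at the end covers that case too, so the argument stands.
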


Note that a move $m_i$ is terminal with respect to the subsequence $m_i, m_{i+1}, \allowbreak \ldots, m_{j-1}$, where $m_j$ is the next move of the $v'$ endpoint moved by $m_i$.
Hence we can obtain a new sequence of SPR moves from $M$ that results in an equivalent tree by modifying the non terminal move $m_i$ to use a different socket in the same panel of the underlying AF.
We must also accordingly modify the subsequent move $m_j$.
Thus, we have the following corollary:

\begin{restatable}{re-cor}{modifynonterminal}
	\label{cor:modify-nonterminal}
	Suppose $m_i = (u,v) \rightarrow (u,v')$ is a non-terminal move of $M$ and the component $k$ of $F^*$ containing $v'$ also contains a socket $v''$.
	Let $m_j = (w,v') \rightarrow (w,x)$ be the next move in $M$ of the $v'$ endpoint moved by $m_i$.
	Then $M' = m_1, m_2, \ldots, m_{i-1}, m'_i, m_{i+1},\allowbreak m_{i+2}, \ldots, m_{j-1}, m'_j, m_{j+1}, m_{j+2}, \ldots, m_k$ is a valid sequence of SPR moves that results in an equivalent tree as $M$, where $m'_i = (u,v) \rightarrow (u,v'')$ and $m'_j = (w,v'') \rightarrow (w,x)$.
\end{restatable}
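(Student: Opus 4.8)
The plan is to derive the corollary from Observation~\ref{obs:modify-terminal} by applying that observation to a \emph{sub-path} of $M$ rather than to $M$ itself. Write $T_1 = t_0, t_1, \ldots, t_k = T_2$ for the sequence of socket forest configurations induced by $M$, and let $c$ denote the connection broken by $m_i$, so that $c = (u,v)$ in $t_{i-1}$ and $c = (u,v')$ in $t_i$. Since $m_i$ is non-terminal, the move $m_j = (w,v')\rightarrow(w,x)$ exists; by the choice of $m_j$ as the \emph{next} move of the endpoint that $m_i$ attaches to $v'$, none of $m_{i+1}, \ldots, m_{j-1}$ touches that endpoint, so it remains attached to $v'$ throughout $t_i, t_{i+1}, \ldots, t_{j-1}$. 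In other words, $m_i$ is a terminal move of the subsequence $m_i, m_{i+1}, \ldots, m_{j-1}$, which transforms $t_{i-1}$ into $t_{j-1}$. Moreover, because $F$ permits $M$ it permits every intermediate tree, so $F$ permits this sub-path; hence $F$ is an SAF of $t_{i-1}$ and $t_{j-1}$, and $F^*$ is the AF underlying it. (If $v'' = v'$ there is nothing to prove, so assume $v'' \ne v'$.)

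Applying Observation~\ref{obs:modify-terminal} to this sub-path, with $v'$ and $v''$ lying in the common component $k$ of $F^*$, replacing the terminal move $m_i$ by $m'_i = (u,v)\rightarrow(u,v'')$ yields a valid sequence $m'_i, m_{i+1}, \ldots, m_{j-1}$ of SPR moves whose final configuration $t'_{j-1}$ agrees with $t_{j-1}$ except that the connection $c$ is attached to $v''$ in place of $v'$; since the moves $m_{i+1}, \ldots, m_{j-1}$ are unchanged, the other endpoint of $c$ follows the same trajectory and is still at $w$ in $t'_{j-1}$, i.e.\ $c = (w,v'')$ there. I would then check that $m'_j = (w,v'')\rightarrow(w,x)$ is a legal SPR move on $t'_{j-1}$: deleting $c$ from $t'_{j-1}$ gives the same socket forest as deleting $c$ from $t_{j-1}$ --- the location of $c$ was the only difference between the two configurations --- so re-attaching $c$ as $(w,x)$ introduces no cycle (it did not in $t_{j-1}$) and produces exactly the configuration $t_j$. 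From here the remaining moves $m_{j+1}, \ldots, m_k$ are applied, unchanged, to the identical starting configuration $t_j$, hence yield $t_{j+1}, \ldots, t_k = T_2$. Therefore $M' = m_1, \ldots, m_{i-1}, m'_i, m_{i+1}, \ldots, m_{j-1}, m'_j, m_{j+1}, \ldots, m_k$ is a valid sequence of SPR moves producing the same final configuration as $M$, in particular a tree equivalent to the one $M$ produces, and $|M'| = |M|$, so $M'$ is optimal whenever $M$ is.

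The only real obstacle is making sure that the hypotheses of Observation~\ref{obs:modify-terminal} genuinely transfer to the sub-path: that $m_i$ is terminal for $m_i, \ldots, m_{j-1}$ (immediate from the definition of $m_j$) and that $F^*$ is the AF underlying an SAF of the sub-path's endpoints $t_{i-1}$ and $t_{j-1}$ (which holds because $F$ permits every intermediate tree of $M$, so it permits both of these), and then checking that the single new move $m'_j$ is cycle-free and reconnects to precisely $t_j$. The remaining bookkeeping --- tracking which endpoint of $c$ is ``the $v'$ endpoint'' and verifying that the $w$-endpoint of $c$ is untouched between $m_i$ and $m_j$, so that its trajectory is identical in $M$ and $M'$ --- is routine given the earlier clarification that a connection, and hence a move on it, is well defined irrespective of the other connections in the forest.
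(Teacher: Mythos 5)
Your proof is correct, and it is essentially the derivation the paper itself sketches in the text immediately preceding the corollary: observe that $m_i$ is terminal for the subsequence $m_i,\ldots,m_{j-1}$, invoke Observation~\ref{obs:modify-terminal} on that sub-path (the hypotheses do transfer, since $F$ permits every intermediate tree and $F^*$ is unchanged), and then adjust $m_j$ to $m'_j$. Where you diverge from the paper's written proof is in the write-up rather than the substance: the appendix proves the corollary from scratch by contradiction, without citing Observation~\ref{obs:modify-terminal}, arguing that if some $m'_q$ were invalid then the offending cycle in $t'_{q-1}$ would have to use the relocated endpoint at $v''$, and since $v'$ and $v''$ lie in the same component of $F^*$ the connection joins the same panels, so validity and (by an inductive equivalence argument over $q>j$) the final tree carry over. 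Your packaging buys a cleaner tail: you note that after $m'_j$ the socket-forest configuration is \emph{identical} to $t_j$, so $m_{j+1},\ldots,m_k$ need no further argument, whereas the paper re-checks validity and equivalence for all later moves. One small caveat: Observation~\ref{obs:modify-terminal} as stated only asserts validity of the modified sequence, not that the resulting configuration differs from $t_{j-1}$ solely in the placement of the one endpoint of $c$; you do justify this (the moves are specified endpoint-by-endpoint, independently of the rest of the configuration), which is exactly the ``differ only in the use of socket $v'$'' invariant the paper's proof also relies on, so this is a presentational dependence rather than a gap.
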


In other words, sockets of a given panel are interchangeable with respect to non terminal moves, and only the specific panel is important.

Given an AF $F'$ of two trees $T_1$ and $T_2$, we say that an SPR path between $T_1$ and $T_2$ is \emph{optimal with respect to $F'$} if there exists no shorter SPR path between $T_1$ and $T_2$ where each intermediate tree along the path is permitted by $F'$.
\begin{restatable}{re-lemma}{saf}
	\label{lem:saf}
  Let $F$ be a socket agreement forest of two trees $T_1$ and $T_2$.
	Then there exists an SPR path between $T_1$ and $T_2$ that is permitted by $F$ and optimal with respect to the AF $F^*$ underlying~$F$.
\end{restatable}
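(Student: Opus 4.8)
The plan is to start from an SPR path between $T_1$ and $T_2$ that is optimal with respect to $F^*$ — one exists because $F^*$ is an AF of $T_1$ and $T_2$, so some sequence of SPR moves whose intermediate trees are all permitted by $F^*$ realizes the distance relative to $F^*$ — and then argue that this path, or a modification of it of the same length, is permitted by $F$ itself. The subtle point is that "permitted by $F^*$" only requires each intermediate tree to be obtainable by adding edges, resolving multifurcations, and suppressing sockets in the underlying forest, whereas "permitted by $F$" additionally constrains the edge modifications to occur at the predetermined socket vertices of $F$ and to be realized by changes to the specific connections of $F$. So the work is to show that an $F^*$-optimal path can always be lifted to a path of connection moves on $F$.

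First I would fix a connection-set configuration of $F$ that permits $T_1$, giving the starting socket forest. Then I would process the moves of the $F^*$-optimal path one at a time, maintaining the invariant that after processing the first $i$ moves we have a connection-set configuration of $F$ permitting the $i$-th intermediate tree $t_i$, reached from the previous configuration by a single SPR move on connections (or possibly no move, if the $F^*$-level change is absorbed into the resolution of a multifurcation and does not alter any connection). For each move of the $F^*$-path that cuts an edge and regrafts it, I would identify which connection of $F$ this corresponds to — using the fact that a move in the $F^*$-tree, after contraction of the added/suppressed structure, is witnessed by moving an endpoint of an edge that lies "between panels," and such edges are exactly the connections — and then show that the regraft target, which in $t_{i+1}$ attaches to some point of a panel, can be realized as an attachment to one of the sockets of that panel. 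Here is where Corollary \ref{cor:modify-nonterminal} (and the preceding Observation \ref{obs:modify-terminal}) does the heavy lifting: if the naive choice of socket would create a cycle among the connections of a multi-socket panel, or if it conflicts with a later move, the interchangeability of sockets within a panel lets me reroute the connection to an equivalent socket, correcting the paired later move as in the corollary, without changing the length of the path or the sequence of trees up to equivalence.

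The main obstacle I expect is precisely the cycle-avoidance bookkeeping inside panels with several sockets: an SPR move on connections is only legal if it does not create a cycle, and while the $F^*$-tree always remains a tree, the corresponding connection graph on a panel could momentarily be forced into a cycle by a careless socket choice. To handle this I would argue that within each panel, at each step, the connections incident to that panel together with the panel's tree structure form a forest, and that there is always a choice of incident socket keeping it a forest — essentially because the target tree $t_{i+1}$ is itself a tree, so the "correct" global attachment exists, and any local cycle can be broken by sliding one connection endpoint to a different socket of the same panel via Corollary \ref{cor:modify-nonterminal}. A secondary technical point is the treatment of singleton panels, where Observation \ref{obs:move-leaf} guarantees the corresponding moves are independent and hence can be scheduled so as not to obstruct the lifting. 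Once every move of the $F^*$-optimal path has been lifted to a legal connection move on $F$, the resulting path is permitted by $F$ by construction, has the same length, and is therefore optimal with respect to $F^*$, which is what the lemma asserts.
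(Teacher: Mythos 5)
Your central idea---reroute any attachment that does not land on a socket of $F$ to some socket of the same panel, using Observation~\ref{obs:modify-terminal} and Corollary~\ref{cor:modify-nonterminal}, and fix up the paired later move---is exactly the engine the paper uses, but the lifting step as you have set it up does not go through. First, your invariant (``after processing the first $i$ moves we have a connection-set configuration of $F$ permitting the $i$-th intermediate tree $t_i$'') is unachievable in general: an $F^*$-optimal path is free to regraft onto a point of a panel where $F$ has no socket, or onto the interior of a connection, so the intermediate trees $t_i$ are simply not permitted by $F$; the lifted path must visit genuinely different trees, and the paper's notion of equivalence (two trees permitted by the same binary tree) does not cover a change of attachment point within a panel. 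Second, and more seriously, Corollary~\ref{cor:modify-nonterminal} is a statement about a move sequence that is already permitted by some socket forest underlain by $F^*$; to invoke it on the given $F^*$-optimal path you must first exhibit such a socket forest (e.g.\ $F$ augmented with extra sockets at the offending attachment points) and represent the path as connection moves on it. You never construct this, so the rerouting is not licensed by the corollary as written. You also do not distinguish terminal from non-terminal moves: rerouting a terminal attachment changes the final tree (that is the content of Observation~\ref{obs:modify-terminal}), so you need an argument that every off-socket attachment is non-terminal, which ultimately rests on the fact that $T_2$ itself is permitted by $F$.

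The paper fills precisely these holes with an induction on the length $d$ of an $F^*$-optimal path: if the first intermediate tree $t_1$ is not permitted by $F$, add a single temporary socket $s$ to obtain $F'$ (still underlain by $F^*$) that does permit $t_1$, apply the inductive hypothesis to get an $F'$-permitted path from $t_1$ to $T_2$, and then---because $T_2$ is permitted by $F$, so $s$ is unconnected at the end and every move using $s$ is non-terminal---strip out all uses of $s$ via Corollary~\ref{cor:modify-nonterminal}, replacing them by another socket in the same component of $F^*$. That one-socket-at-a-time bookkeeping is what makes the corollary applicable and the terminality condition automatic; your single-pass version would need an equivalent construction spelled out. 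Two smaller points: the cycle-avoidance machinery you propose inside multi-socket panels is unnecessary, since validity of the rerouted move is already established in the corollary's proof (the two candidate edges join the same components of $F^*$), and Observation~\ref{obs:move-leaf} plays no role in this lemma.
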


Thus, if we can construct a socket agreement forest for $T_1$ and $T_2$, we can be assured of a valid SPR path between $T_1$ and $T_2$ that is optimal with respect to the underlying agreement forest.
However, it is not trivial to compare socket agreement forests by the length of the SPR path between the trees, and thus they are only a partial analogue of maximum agreement forests.

\section{Unrooted chain reduction is distance-preserving}
\label{sec:chain-reduction}

\begin{figure}[t]
	\hspace*{\stretch{1}}
	\includegraphics[scale=1]{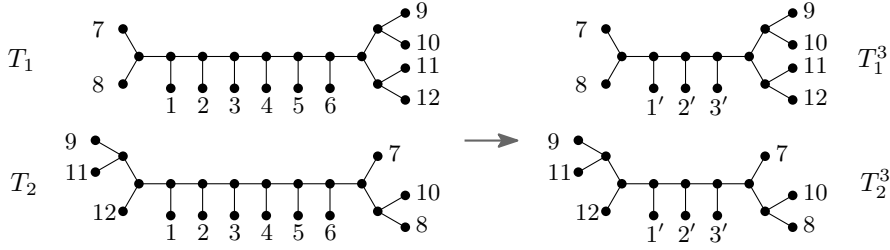}
	\hspace*{\stretch{1}}

	\caption{An application of the chain reduction rule transforming two trees $T_1$ and $T_2$ into trees $T_1^3$ and $T_2^3$ with the same SPR distance.
	}
	\label{fig:chain-reduction}
\end{figure}

In this section we investigate the chain reduction rule for unrooted trees, noting that chain reduction for rooted trees is a key component of fast rooted SPR algorithms and TBR algorithms.

\begin{definition}[Chain Reduction Rule]
Replace a chain of subtrees that occur identically in both trees with three new leaves with new labels oriented to preserve the direction of the chain (Figure~\ref{fig:chain-reduction}).
\end{definition}


\begin{definition}[Subtree Reduction Rule]
Replace a pendant subtree that occurs identically in both trees with a single new leaf~\cite{hickey2008spr}.
\end{definition}

In conjunction with the subtree reduction rule, the chain reduction rule for rooted trees reduces the number of leaves in each tree to a linear function of the MAF size $k$---at most $28k$---while preserving the rooted SPR and TBR distance.
This rule is thus a key element of fixed-parameter tractability proofs for rooted SPR~\cite{bordewich2005computational} and TBR~\cite{allen01}.
Allen and Steel~\cite{allen01} conjectured that the chain reduction rule also holds for unrooted SPR, but this claim has been difficult to prove or disprove.
Bonet and St. John~\cite{bonet2010complexity} proved that a relaxed version of the chain reduction rule holds for \underline{unrooted} trees: the $9k$-chain reduction rule, which replaces each chain of subtrees with $9k$ leaves rather than 3.
Although useful to prove the fixed-parameter tractability of uSPR, this ``reduction" will typically greatly inflate the size of the trees in practice.
The resulting quadratic bound on the number of leaves---at most $76k^2$---is most impractical for computing the uSPR distance.

Previous work on unrooted trees has identified four cases that must be considered to prove that the chain reduction rule holds for unrooted trees, depending on which chain edges are broken by an optimal sequence of uSPR moves.
Bonet and St. John~\cite{bonet2010complexity} proved that the first two cases preserve uSPR distance and the latter two do not reduce it by more than 1.
The basic idea behind their proofs (inspired by a similar idea of Hickey et al.~\cite{hickey2008spr}) was to alter the initial trees to obtain a new pair of trees, each of which differs from the original by one SPR, with the common chain as subtrees to which one can apply the subtree reduction rule.
This directly gives a lower bound of two less than the distance, as shown by Hickey et al.~\cite{hickey2008spr}.
Bonet and St. John refined this approach by adding two additional elements to the chain and removing two of the common chain subtrees.

In order to explain their procedure, we make the following trivial observation.
\begin{obs}
Assume two ordered pairs of trees $(T_1,T_2)$ and $(T_1',T_2')$, each tree in a pair with the same leaf set, such that there is a bijection between the two leaf sets that when applied to the ordered pair $(T_1,T_2)$ results in the ordered pair $(T_1',T_2')$.
Then $\dspr{T_1,T_2} = \dspr{T_1',T_2'}$.
\end{obs}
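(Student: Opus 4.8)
The plan is to observe that a relabeling of leaves is a graph isomorphism that interacts trivially with the SPR operation, so it preserves SPR path lengths. Let $\sigma$ denote the given bijection between the two leaf sets, so that applying $\sigma$ to the labels of $T_1$ and of $T_2$ produces $T_1'$ and $T_2'$, respectively. First I would extend $\sigma$ to a map on trees: given an $X$-tree $T$, let $\sigma(T)$ be the tree obtained by relabeling each leaf $x$ with $\sigma(x)$ and leaving the underlying graph and all internal nodes unchanged. Since $\sigma$ is a bijection on label sets, this is a bijection from the set of $X$-trees to the set of $X'$-trees, with inverse induced by $\sigma^{-1}$, and by hypothesis $\sigma(T_1) = T_1'$ and $\sigma(T_2) = T_2'$.

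The key step is to check that $\sigma$ respects the single-SPR-move relation: if a tree $S$ is obtained from a tree $T$ by one SPR operation, then $\sigma(S)$ is obtained from $\sigma(T)$ by one SPR operation. This is immediate from the definition of an SPR move in Section~\ref{sec:prelim}: cutting an edge $e=(u,v)$, introducing $v'$ by subdividing an edge of $T_v$, adding $(u,v')$, and suppressing $v$ all depend only on the graph structure of $T$, not on the leaf labels. Relabeling therefore commutes with each of these steps, so the SPR neighbor relation is preserved (and, applying the same reasoning to $\sigma^{-1}$, reflected) by $\sigma$.

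It follows that $\sigma$ sends any SPR path $T_1 = t_0, t_1, \ldots, t_d = T_2$ to an SPR path $T_1' = \sigma(t_0), \sigma(t_1), \ldots, \sigma(t_d) = T_2'$ of the same length, whence $\dspr{T_1',T_2'} \le \dspr{T_1,T_2}$; applying the same argument to $\sigma^{-1}$ and the pair $(T_1',T_2')$ yields the reverse inequality, and hence $\dspr{T_1,T_2} = \dspr{T_1',T_2'}$. There is no substantive obstacle here — the statement is genuinely trivial, as the authors note — and the only point requiring a moment's care is to phrase the SPR operation in a manifestly label-independent way, which the definition in Section~\ref{sec:prelim} already does.
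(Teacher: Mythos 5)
Your proof is correct: the relabeling argument (extend the bijection to trees, note it commutes with SPR moves since the operation is label-independent, hence maps SPR paths to SPR paths of equal length in both directions) is precisely the reasoning behind the paper's claim, which the authors state as a trivial observation without writing out a proof. Nothing is missing, and your care in noting that the inverse bijection gives the reverse inequality is exactly the right level of detail.
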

\noindent
We will say pairs of trees satisfying the hypothesis of this observation \emph{have the same tanglegram}~\cite{Matsen2016-cv,Billey2017-ci}, and so the observation can be rephrased as saying that pairs of trees with the same tanglegram have the same SPR distance.

\begin{figure}[t]
	\centering
	\hspace*{\stretch{1}}
	\subfigure[\label{fig:chain-notation}]{\includegraphics[scale=0.9]{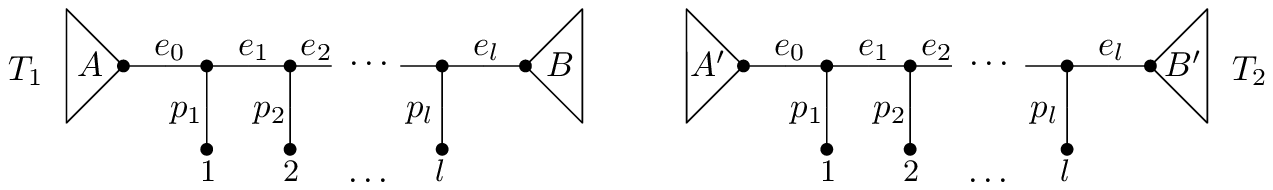}}
	\hspace*{\stretch{1}}

	\hspace*{\stretch{1}}
	\subfigure[\label{fig:chain-refinement}]{\includegraphics[scale=0.9]{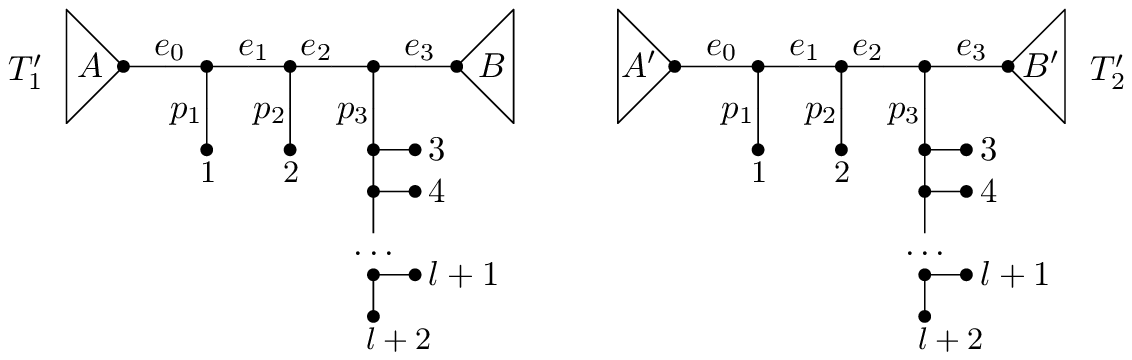}}
	\hspace*{\stretch{1}}

	\hspace*{\stretch{1}}
	\subfigure[\label{fig:chain-removal}]{\includegraphics[scale=0.9]{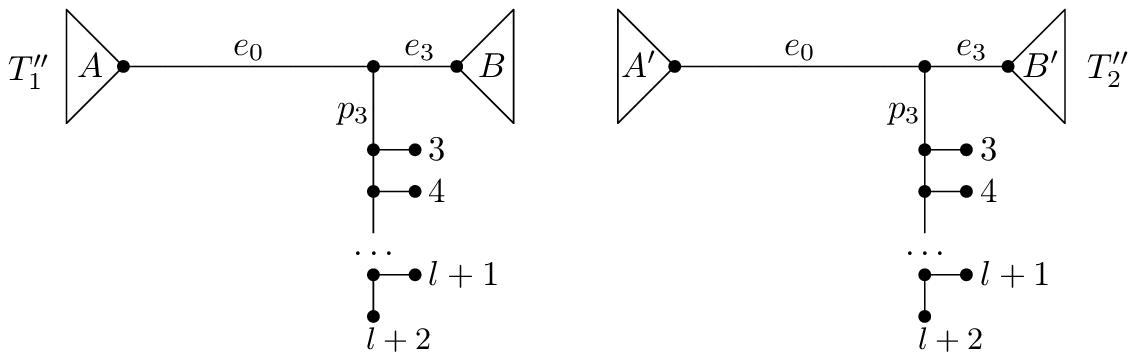}}
	\hspace*{\stretch{1}}

	\caption{(a)~Notation for a common chain in trees $T_1$ and $T_2$.
		(b)~The refined trees $T_1'$ and $T_2'$ after applying the refinement of Bonet and St. John but before removing any pendant edges.
		Two additional labels are added to the chain, which does not increase the SPR distance.
		These trees have a chain of length 3 after applying the subtree reduction rule.
		(c)~The refined trees $T_1''$ and $T_2''$ after removing the pendant edges $p_1$ and $p_2$.
		Each refined tree can be obtained by applying a single SPR operation to the original and mapping the labels $1, 2, \ldots l$ to $3, 4, \ldots l+2$.
	}
	\label{fig:chains}
\end{figure}

Here we refine these previous bounds and thus show that the chain reduction rule is distance-preserving: an application of the chain reduction rule does not change uSPR distance.
To discuss this formally, we introduce notation for common chains (Figure~\ref{fig:chain-notation}).
Pendant edges of a chain $1, 2, \ldots, l$ are labeled $p_1, p_2, \ldots, p_l$.
The edges connecting the chain are labeled $e_0, e_1, \ldots, e_l$.
$e_0$ is connected to subtree $A$ and $A'$ in $T_1$ and $T_2$, respectively.
Similarly, $e_l$ is connected to $B$ and~$B'$.

Bonet and St. John first add two pendant edges with leaves labelled $l + 1$ and $l + 2$ on the edge leading to $l$ for each tree, obtaining new trees $T_1^{l+2}$ and $T_2^{l+2}$.
Any sequence of SPR moves that transforms $T_1$ into $T_2$ also transforms $T_1^{l+2}$ into $T_2^{l+2}$.
Thus, $\dspr{T_1, T_2} = \dspr{T_1^{l+2}, T_2^{l+2}}$.

Next, they apply SPR to each of $T_1^{l+2}$ and $T_2^{l+2}$, attaching $B$ to $e_3$ and $B'$ to $e_3$ (Figure~\ref{fig:chain-refinement}), obtaining new trees $T_1'$ and $T_2'$.
$T_1'$ and $T_2'$ are at most 2 SPR moves closer than $T_1$ and $T_2$, that is, $\dspr{T_1', T_2'} \ge \dspr{T_1, T_2} - 2$, simply by virtue of being a pair of moves away from $T_1^{l+2}$ and $T_2^{l+2}$.
$T_1'$ and $T_2'$ have a chain of length 3 and, after applying the subtree reduction rule and relabeling leaves, have the same tanglegram as the trees obtained by chain reduction applied to leaves $1, \ldots, l$ of $T_1$ and $T_2$.
By the implied equality, chain reduction reduces the SPR distance by at most 2.

Finally, Bonet and St. John remove two of the three pendant edges (e.g. $p_1$ and $p_2$ in Figure~\ref{fig:chain-removal}) to obtain trees $T_1''$ and $T_2''$.
Let $\delta = \dspr{T_1',T_2'} - \dspr{T_1'',T_2''}$.
A leaf is moved at most once in an optimal SPR path (for a formal argument, see Observation~\ref{obs:move-leaf}).
Thus, $\delta=1$ if one of the removed pendant edges is moved as part of an optimal SPR path, and $\delta=2$ if both are moved.

Observe that a pair of trees with the same tanglegram as the pair $T_1'', T_2''$ can be obtained from $T_1, T_2$ by a single SPR move applied to each tree of the pair.
$T_1''$ can be obtained from $T_1$ by attaching $B$ to $e_0$ and changing the labels of leaves $1,2, \ldots l$ to $3, 4, \ldots l+2$.
$T_2''$ can be obtained from $T_2$ by attaching $B'$ to $e_0$ and applying the same relabelling.
This implies that $T_1''$ and $T_2''$ are at most 2 SPR moves closer than $T_1$ and $T_2$.
We thus have that $\dspr{T_1'', T_2''} \ge \dspr{T_1, T_2} - 2$.
Because $\dspr{T_1'', T_2''} = \dspr{T_1',T_2'} - \delta$, we can conclude that $\dspr{T_1', T_2'} \ge \dspr{T_1, T_2} + \delta - 2$.
An application of the chain reduction can never increase the SPR distance, because any move applied to the chain-reduced tree has an equivalent in the original trees, and equality of chain-reduced trees implies equality of the original trees.
Therefore, the chain reduction does not change the SPR distance in the case that both removed edges were involved in an optimal SPR path (i.e.\ the case of $\delta=2$), and the chain reduction reduces the distance by at most one when one removed edge was involved in such a path (i.e.\ the case of $\delta=1$).
Bonet and St. John showed that the case where $\delta=0$ can be transformed to a case where $\delta=1$, implying that the chain reduction reduces the distance by at most one in this case as well.

We look at this reduction in a different but related way.
Let $\bar T_i''$ be $T_i''$ after relabeling the leaves $3, 4, \ldots l+2$ to $1, 2, \ldots l$.
One can use the same moves as found for going from $T_1''$ to $T_2''$ in order to get from $\bar T_1''$ to $\bar T_2''$.
We thus have a path $T_1$ to $\bar T_1''$ to $\bar T_2''$ and, finally, to $T_2$.
This sequence does not break any edge in the common chain, as the common chain becomes a common subtree in $\bar T_1''$ and $\bar T_2''$.
We can thus directly apply this sequence to the chain-reduced trees.
If using the resulting sequence of SPR moves avoids two SPR operations (involving $p_1$ and $p_2$) then this strategy preserves the SPR distance.
On the other hand, this strategy requires one extra SPR move when maintaining the common chain only avoids one SPR operation.
By carefully analyzing the sequence of SPR moves, we show that only the $T_2$ transformation is necessary in such cases.
To do so, we generate a sequence of moves that proceeds from $T_1$ to $\bar T_2''$ to $T_2$, without touching $\bar T_1''$.
Avoiding $\bar T_1''$ avoids the extra SPR move and tightens the bound to equality.

\begin{figure}[t]
	\centering
	\hspace*{\stretch{1}}
	\includegraphics[scale=1.]{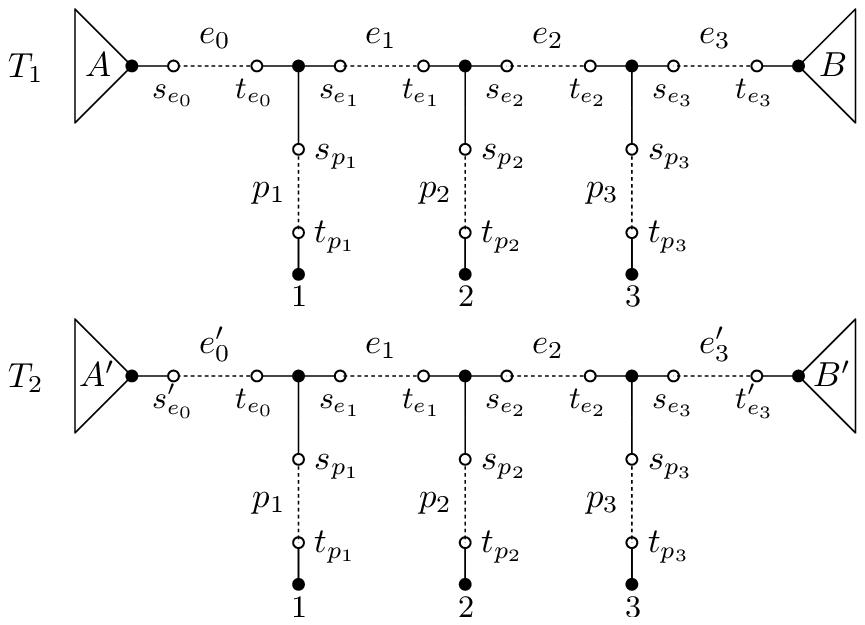}
	\hspace*{\stretch{1}}

	\caption{
		Configurations of the socket agreement forest $F'$ that permit $T_1^3$ and $T_2^3$. In the proof of Theorem~\ref{thm:chain-reduction} we transform a sequence $M$ of moves from $T_1^3$ to $T_2^3$ into a sequence $M'$ that maintains the chain and results in this $T_2$ configuration.
	}
\label{fig:chain-sockets}
\end{figure}

\begin{theorem}
\label{thm:chain-reduction}
The chain reduction rule does not change uSPR distance.
\end{theorem}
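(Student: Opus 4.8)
The inequality in one direction is already explained in the excerpt: an application of chain reduction never increases uSPR distance, since any SPR move on the chain-reduced trees lifts to an equivalent move on the originals, and equality of the reduced trees implies equality of the originals. So the work is to show that chain reduction cannot \emph{decrease} the distance either. Following the discussion preceding the theorem, it suffices to treat a single common chain $1,2,\dots,l$ and show that, given an optimal SPR path $M$ from $T_1$ to $T_2$, one can produce an SPR path between the chain-reduced trees $T_1^3$ and $T_2^3$ of length at most $\dspr{T_1,T_2}$. The case analysis of Bonet and St.\ John reduces everything to the situation where, in the path witnessing $\dspr{T_1'',T_2''}$, exactly one of the two removed pendant edges $p_1,p_2$ is moved (the $\delta=1$ case, into which $\delta=0$ can also be folded); in the $\delta=2$ case the bound is already tight. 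So the crux is: \emph{when maintaining the common chain saves only one SPR operation, show that one can avoid the second by generating a move sequence that goes $T_1 \to \bar T_2'' \to T_2$ without passing through $\bar T_1''$.}

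First I would fix a socket agreement forest $F'$ of $T_1^3$ and $T_2^3$ whose configurations permitting $T_1^3$ and $T_2^3$ are exactly those pictured in Figure~\ref{fig:chain-sockets}, with designated sockets at the chain attachment points and at the (three) reduced chain leaves. By Lemma~\ref{lem:saf}, there is an SPR path between $T_1^3$ and $T_2^3$ permitted by $F'$ and optimal with respect to the underlying AF $(F')^*$; equivalently, lifting back through the chain, I get a concrete move sequence $M$ on the full trees $T_1,T_2$ whose restriction to socket-legal moves I can analyze edge-by-edge. The point of passing to the SAF is precisely what the section preamble advertises: SPR moves on general trees insert and delete internal nodes, but on $F'$ the only mutable objects are connections between a fixed finite set of sockets, so I can track which connection each move touches and in what order, and I can talk about moves being \emph{terminal}, \emph{non-terminal}, and \emph{independent} in the senses of Observation~\ref{obs:move-leaf}, Observation~\ref{obs:modify-terminal}, and Corollary~\ref{cor:modify-nonterminal}.

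Next I would carry out the move-surgery. Start from the Bonet--St.\ John path $T_1 \to \bar T_1'' \to \bar T_2'' \to T_2$, which never breaks a chain edge (the chain is a common subtree of $\bar T_1''$ and $\bar T_2''$) and hence descends to a path on the chain-reduced trees; it is one move too long exactly in the $\delta=1$ case because it performs both the $T_1$-side SPR (the move taking $T_1$ to $\bar T_1''$, reattaching $B$ to $e_0$) and the symmetric $T_2$-side SPR. The goal is to suppress the $T_1$-side move. In SAF language, the $T_1 \to \bar T_1''$ step is a single connection-move attaching to a socket in the chain panel; I claim that in the $\delta=1$ case this move is \emph{independent} of the rest of the path in the sense of the Observations, so it can be commuted to the end and there cancelled against an inverse move, or — more precisely — that the socket it attaches to and the socket used by the one ``saved'' operation lie in the same panel of the underlying AF, so by Corollary~\ref{cor:modify-nonterminal} the chain of moves can be rerouted to attach directly to the final ($T_2$) configuration socket, producing a path $T_1 \to \bar T_2'' \to T_2$ of length one less. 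Keeping careful track of the three reduced chain leaves and the two extra labels $l+1,l+2$ introduced in the refinement, this rerouted path has length $\dspr{T_1'',T_2''}+1 = \dspr{T_1',T_2'} = \dspr{T_1,T_2}$ after accounting for the $\delta=1$ saving, giving $\dspr{T_1^3,T_2^3}\le \dspr{T_1,T_2}$, and combined with the easy direction, equality.

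The main obstacle is the middle step: showing that in the $\delta=1$ case the $T_1$-side refinement move is genuinely removable, i.e.\ that the socket it uses can be identified with one already available in the $T_2$-configuration panel, so that Corollary~\ref{cor:modify-nonterminal} applies. This requires a delicate bookkeeping of \emph{which} chain sockets are occupied at each point of $M$ and an argument that the one non-chain SPR operation saved by maintaining the common chain is exactly the operation that would otherwise ``use up'' that socket — in other words, that one never simultaneously needs both the $\bar T_1''$ detour and the saved move. I expect this to hinge on the independence machinery of Section~\ref{sec:sockets} together with the fact that a leaf is moved at most once along an optimal path (Observation~\ref{obs:move-leaf}), which rules out the pathological interleavings; the rest of the proof is the routine tanglegram/relabeling bookkeeping already sketched in the excerpt.
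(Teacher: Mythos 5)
There is a genuine gap, and it sits exactly where your sketch defers: the claim that in the $\delta=1$ case the $T_1$-side refinement move is ``independent'' of the rest of the path and can be commuted away or cancelled is precisely the open content of the theorem, and the machinery you invoke does not deliver it as stated. Observation~\ref{obs:move-leaf} gives independence only for moves that break an edge attached to a \emph{singleton} panel (a leaf); the move taking $T_1$ to $\bar T_1''$ relocates the subtree $B$ onto the chain, i.e.\ it attaches to a panel with many sockets, and the paper explicitly warns that connections to multi-socket panels may form cycles depending on move order, so no commutation is licensed. Corollary~\ref{cor:modify-nonterminal} only lets you re-aim a non-terminal move at a \emph{different socket of the same panel}; it never lets you delete a move. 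So the sentence ``I claim this move is removable \ldots\ I expect this to hinge on the independence machinery'' is a restatement of the goal, not a proof. There is also a directional muddle in your setup: to show chain reduction cannot \emph{decrease} the distance you must start from an optimal path between the \emph{reduced} trees $T_1^3,T_2^3$ (of length $k$) and produce a path of length $k$ between the originals; producing a reduced path of length at most $\dspr{T_1,T_2}$, as you propose, is the easy direction you already conceded, and your appeal to Lemma~\ref{lem:saf} to get ``a concrete move sequence on the full trees'' conflates optimality with respect to an underlying AF with genuine optimality.

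The paper's actual argument is concrete where your plan is aspirational. It takes an optimal sequence $M$ between $T_1^3$ and $T_2^3$ and splits into cases by which chain edges $M$ breaks. The new work (Case 3: exactly one pendant edge $p_x$ broken) uses Observation~\ref{obs:move-leaf} only to push the $p_x$-breaking move to the last position $m_k$, then rewrites every earlier move by explicit redirection rules: attachments to sockets of $e_1$, $p_2$, $e_2$, $p_1$, $p_3$ are rerouted to the chain ends $t_{e_0}$ or $s_{e_3}$ (the end chosen according to $x$), and breaks of $e_1$ or $e_2$ become breaks of $e_0$ or $e_3$; validity of the rewritten sequence is what Observation~\ref{obs:modify-terminal} and Corollary~\ref{cor:modify-nonterminal} are actually used for. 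The punchline is a case analysis on the \emph{last event} that joined $A''$ (resp.\ $B''$) to the chain, showing that after the rewrite at least one of $A'$, $B'$ is already attached to the correct end, so a single final move (replacing $m_k$, a one-sided version of the Bonet--St.\ John refinement) completes a chain-maintaining path of length $k$, which then lifts to the original trees by reinserting leaves $4,\dots,l$ on $e_1$ (Case 1). A further Case 4 handles paths that break $e_1$ or $e_2$ without breaking any pendant edge; your proposal does not address this situation at all, since your $\delta$-based decomposition only tracks which removed pendant edges are moved. Without the redirection rules and the $A''/B''$ attachment analysis (or a substitute for them), the proposal does not close the $\delta=1$ case and hence does not prove the theorem.
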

\begin{proof}
Let $T_1$ and $T_2$ be trees with a common chain $1, 2, \ldots, l$.
Let $T_1^3$ and $T_2^3$ be the result of applying the chain reduction rule to $T_1$ and $T_2$, labeled as in Figure~\ref{fig:chain-notation} with $l=3$.
Let $\dspr{T_1, T_2} = d$.
Let $M = m_1, m_2, \ldots, m_k$ be an optimal sequence of moves transforming $T_1^3$ into $T_2^3$.
As above, $d \geq k$.
Let $F$ be a socket agreement forest that permits $M$.
We consider four cases.

\casefmt{Case 1:} There exists an optimal sequence of moves $M$ transforming $T_1^3$ into $T_2^3$ without breaking $p_1$, $p_2$, $p_3$, $e_1$, or $e_2$.
Insert elements $4, 5, \ldots, l$ on $e_1$ to obtain trees isomorphic to $T_1$ and $T_2$.
$M$ maintains the common chain as an induced subgraph and, so, does not move the inserted elements.
Thus, $d = \dspr{T_1, T_2} = \dspr{T_1^3, T_2^3} = k$.

\casefmt{Case 2:} $M$ breaks two or three pendant edges from the set $\set{p_1, p_2, p_3}$.
The transformation of Bonet and St. John~\cite{bonet2010complexity} recalled above shows that $d = k$.

\casefmt{Case 3:} $M$ breaks exactly one pendant edge, $p_x$.
We will transform $M$ into a sequence of moves that does not break any pendant edges $p_i$ nor chain edges $e_1$ or $e_2$.
Then by Case 1, $d = k$.

By Observation~\ref{obs:move-leaf}, we can assume that $M$ moves $p_x$ last, so that $m_k$ breaks $p_x$ and $m_i$ does not, for all $1 \le i < k$.
We modify $M$ to $M' = m'_1, m'_2, \ldots, m'_{k-1}$ to avoid moves that change the middle of the chain (edges $p_2$, $e_1$, and $e_2$) but result in an equivalent final tree.
These moves will be redirected to the ends of the chain at $e_0$ and $e_3$.
We set up this redirection so that at least one of $A'$ and $B'$ will be attached to the correct end of the chain.
As such, we obtain a socket agreement forest $F'$ that permits $M'$ by making $e_0$ and $e_3$ into connections if they were not already, which results in at most 4 more sockets (see Figure~\ref{fig:chain-sockets}, which shows possible sockets along each edge).
We refer to the sockets of a connection $c$ by $s_c$ and $t_c$ (e.g. $s_{e_0}$ and $t_{e_0}$).
The connections corresponding to $e_0$ in $T_1$ and $T_2$ may differ, and as such we label them differently, as $e_0$ and $e'_0$, respectively.
Similarly, $e_3$ and $e'_3$ denote the connections that correspond to $e_3$ in $T_1$ and $T_2$, respectively.

Specifically, we apply the following move modification rules:
\begin{enumerate}[label=\Roman*.,topsep=5pt]
	\item If a move $m_i$ attaches a connection to a socket of $e_1$, $p_2$, or $e_2$, we instead redirect it to one end or another of the chain to define $m_i'$: if $x=1$, attach it to $t_{e_0}$; if instead $x=2$ or $3$, attach it to $s_{e_3}$. Recall that multiple connections may connect to the same socket, so this does not prevent any future moves, including other redirected moves.
		A subsequent move $m_j$ of that connection does so from its new position, defining the equivalent move $m_j'$.
	\item If a move $m_i$ attaches a connection to a socket of $p_1$, $m'_i$ instead attaches it to $t_{e_0}$. Similarly, connections to $p_3$ are redirected to $s_{e_3}$.
	\item
    We also redirect breaks in the middle of the chain to one side or the other: if a move $m_i$ breaks $e_1$ or $e_2$, define $m_i'$ to be the corresponding move that breaks $e_0$ or $e_3$, respectively, and attaches to the same side of that connection. For example, if $m_i$ changes $e_1 = (s_{e_1},t_{e_1})$ to $(s_{e_1},x)$ then $m'_i$ changes $e_0 = (s_{e_0},t_{e_0})$ to $(s_{e_0},x)$.
    In this case where $m_i$ breaks $e_1$, any subsequent move $m_j$ that would attach a connection to the $e_0$--$p_1$ path (i.e. sockets $t_{e_0}$, $s_{e_1}$, $s_{p_1}$, and $t_{p_1}$) before leaf $1$ returns to the chain instead attach to $s_{e_0}$.
    In the case that $m_i$ breaks $e_2$, we redirect connections to the $e_2$--$p_2$ path to $t_{e_3}$ in the analogous manner.
	Note that (thinking for a moment in terms of trees rather than socket forests) an SPR move breaking $e_1$ contracts the $e_0$--$p_1$ path into a single edge adjacent to leaf $1$ in the original tree.
		Moreover, no move of $M$ except $m_k$ breaks such an edge, and $M'$ does not apply $m_k$.
    This implies that no subsequent moves in $M'$ try to break $e_0$ (resp.\ $e_3$), after they have already been broken by transforming an $e_1$ break into an $e_0$ break.
	\item Any move $m_i$ not covered by the previous rules is replaced by an equivalent move $m'_i$ attaching the same end of the same numbered connection to the same socket.
\end{enumerate}

By Observation~\ref{obs:modify-terminal} and Corollary~\ref{cor:modify-nonterminal}, attaching to $t_{e_0}$ or $s_{e_3}$ in place of $e_1$, $e_2$, $p_1$, $p_2$, or $p_3$ sockets does not prevent any allowed moves if we maintain the chain as a single panel.
Similarly, we do not break any of the $p_i$s until $m_k$, so breaking $e_0$ in place of $e_1$ (or $e_3$ in place of $e_2$) does not prevent any moves.
Thus, $M'$ transforms $T_1^3$ into a tree consisting of $A'$ and $B'$ attached in some manner to the chain $1, 2, 3$.

Now, consider the location of $A'$ and $B'$.
As defined above, socket $s'_{e_0}$ is the socket of a panel $A''$ of $A'$ that is connected to the chain in $T_2$ by a connection $e'_0$.
Similarly, let socket $t'_{e_3}$ be the socket of a panel $B''$ of $B'$ that is connected to the chain in $T_2$ by a connection $e'_3$.
We will first show that one of these is in the correct location, that is, $A'$ is connected to the left side of the chain ($e'_0$ connects $s'_{e_0}$ of $A''$ and $t_{e_0}$)  or $B'$ is connected to the right side (i.e. $e'_3$ connects $t'_{e_3}$ of $B''$ and $s_{e_3}$).

	There are three possible events that joined $A''$ to the chain in the original sequence of tree moves, which we describe first in socket forest terms and then parenthetically with respect to ``classical'' SPR moves on trees without sockets:
\begin{enumerate}[label=\roman*.,topsep=5pt]
	\item $e'_0$ was attached to $t_{e_0}$, or $e'_0$ was attached to a socket of $p_1$, $e_1$, $p_2$, or $e_2$ and then $m_d$ moved leaf 1 to recreate the chain ($A''$ was moved to the chain).
	\item $e'_0$ was attached to $s'_{e_0}$ (the chain was moved to $A''$).
	\item $e'_0$ was never changed (a subtree on the path from $A''$ to the chain was moved).
\end{enumerate}
Note that it is possible that two or more such events occured to $e'_0$ during the application of $M$.
We consider only the last such event.
Similarly, there are three analogous events that can join $B'$ to the chain, depending on whether $e'_3$ was attached to the chain, $t'_{e_3}$, or never changed. We consider each pair of events with respect to our modified $M'$.

If $A''$ was moved to the chain by $M$, then $e'_0$ was either attached to $t_{e_0}$ by $M$ or attached to a socket in \set{$p_1$, $e_1$, $p_2$, $e_2$} and $x=1$.
Our first modification rule therefore implies that $M'$ attaches $e'_0$ to $t_{e_0}$.
Similarly, if $B''$ was moved to the chain by $M$ then $M'$ attaches $e'_3$ to $s_{e_3}$.

If the chain was moved to both $A''$ and $B''$ then either $e_0 = e'_0$, $e_3 = e'_3$, or there must be two moves $m_i$, $m_j$, so that $m_i$ attaches w.l.o.g. $e'_0$ to the chain and $m_j$ connects $e'_0$ to $s'_{e_0}$.
In the first two cases, the fact that moving $p_x$ with $m_k$ results in $T_2^3$ implies that at least one of $A''$, $B''$ is in the correct location after applying $M'$.
In the latter case, $m_i$ either attached $e'_0$ to $t_{e_0}$ or $m_i$ attached $t_{e_0}$ to a socket of $p_1$, $e_1$, $p_2$, or $e_2$ and then $m_k$ moved leaf 1, that is $x=1$.
Our rules again imply that $m'_i$ attaches $e'_0$ to $t_{e_0}$.
Then $m'_j$ attaches $e'_0$ to $s'_{e_0}$ by Observation~\ref{obs:modify-terminal} and Corollary~\ref{cor:modify-nonterminal}.

If both $e'_0$ and $e'_3$ never change during $M$, i.e.\ a subtree on the path from $A''$ to the chain was moved, and similarly for $B''$, then the fact that moving $p_x$ with $m_k$ results in $T_2^3$ again implies that at least one of $A''$, $B''$ is in the correct location.

Thus we conclude that one of the subtrees (say $A'$) must be in the correct location.
The fact that $M$ results in $T_2^3$ implies that $e'_3$ exists but connects $t'_{e_3}$ to a socket other than $s_{e_3}$ after applying $M'$ (in fact, our rules imply that it is connected to $t_{e_0}$).
We apply a final move $m'_k$ to connect $e'_3$ to $s_{e_3}$, replacing the $p_x$ move of $M$ in a one-sided variant of Bonet and St. John's refinement.
We then have that $m'_1, m'_2, \ldots, m'_{k-1}, m'_k$ transforms $T_1^3$ into $T_2^3$ while maintaining the common chain.

\casefmt{Case 4:} None of the above. $M$ breaks at least one of $\set{e_1, e_2}$ and does not break $p_1$, $p_2$, or $p_3$.

We first observe that breaking $e_1$ and $e_2$ also effectively breaks $p_2$, so Case 3 applies.
Now, suppose $M$ breaks exactly one of $e_1$ or $e_2$ and refer to the connection as $e_y$.
Modify $M$ to $M' = m'_1, m'_2, \ldots, m'_k$ using the modification rules I through IV, with $y$ in place of $x$ as the criteria for deciding the side of the chain to which to redirect connections.
We again consider how the chain becomes connected to $A''$ and $B''$.
The above argument holds if we move both $A''$ and $B''$.
Moreover, w.l.o.g. if we do not move $A''$ then we either connect the chain directly to $A''$ or break pendant edges along the path from $A''$ to the chain.
In either event, the fact that we do not break any edge $p_i$ in $M$ along with our rules for $M'$ imply that $e'_0$ connects $s'_{e_0}$ and $t_{e_0}$ and that $e'_3$ connects $s_{e_3}$ and $t'_{e_3}$ after applying $m'_k$.
Thus, $d = k$ and the claim follows.
\end{proof}

Combining the subtree and chain reduction rules we achieve a reduction procedure for computing the SPR distance between unrooted trees that results in a linear-size kernel.
This combined reduction procedure provides the base step for efficient SPR distance calculations on unrooted trees.

\begin{restatable}{re-cor}{reducetrees}
	\label{cor:reduce-trees}
	Let $T_1$ and $T_2$ be a pair of unrooted trees.
	Repeatedly applying the subtree and chain reduction rules to $T_1$ and $T_2$ until neither rule is applicable results in a pair of trees $T_1'$ and $T_2'$ with at most $28\dspr{T_1, T_2}$ leaves and such that $\dspr{T_1, T_2} = \dspr{T_1', T_2'}$.
\end{restatable}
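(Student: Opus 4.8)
The plan is to treat the two assertions of the corollary separately: that the iterated reduction is distance-preserving, and that it terminates with trees on at most $28\dspr{T_1,T_2}$ leaves. For distance preservation I would observe that a single application of either rule preserves the uSPR distance---the subtree reduction by Hickey et al.~\cite{hickey2008spr} and the chain reduction by Theorem~\ref{thm:chain-reduction}---and that we only ever apply a rule when it strictly decreases the number of leaves (a reducible common pendant subtree has at least two leaves; a reducible common chain has at least four subtrees, so replacing it by three leaves is a strict decrease). Hence the leaf count strictly decreases along the process, which therefore terminates after finitely many steps, and composing the per-step equalities gives $\dspr{T_1,T_2}=\dspr{T_1',T_2'}$ for the terminal pair $T_1',T_2'$, irrespective of the order of applications.

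For the size bound, fix the terminal pair $T_1',T_2'$ on a common label set $X'$, set $d:=\dspr{T_1,T_2}=\dspr{T_1',T_2'}$, and let $F$ be a maximum agreement forest of $T_1'$ and $T_2'$ with $k:=m(T_1',T_2')$ components. By the identity of Allen and Steel~\cite{allen01}, $\dtbr{T_1',T_2'}=k-1$, and since every SPR operation is a special case of a TBR operation we have $k-1=\dtbr{T_1',T_2'}\le\dspr{T_1',T_2'}=d$. It thus suffices to show $|X'|\le 28(k-1)$. To do so I would run the standard kernelization counting argument used by Allen--Steel~\cite{allen01} for TBR and by Bordewich--Semple~\cite{bordewich2005computational} for rooted SPR, transported to the unrooted setting: realize $F$ inside $T_1'$ by deleting a minimal set of $k-1$ edges, let $S$ be the minimal subtree of $T_1'$ spanning the (at most $2(k-1)$) endpoints of the deleted edges, note that $S$ is a tree with $O(k)$ leaves, branch vertices, and edges, observe that each maximal subtree of $T_1'$ pendant to $S$ lies within a single component of $F$ and is therefore a common pendant subtree of $T_1'$ and $T_2'$---so that irreducibility under the subtree rule caps its size by a constant---and that any maximal run of consecutive such pendant subtrees along an edge-path of $S$ is a common chain of $T_1'$ and $T_2'$, so that irreducibility under the chain rule caps its length at three. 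Summing these constantly-many-per-feature contributions over the $O(k)$ features of $S$, and tracking constants exactly as in the rooted case, yields $|X'|\le 28(k-1)\le 28d$. The degenerate cases $d\le 1$, in which the reduced trees carry at most a constant number of leaves, are checked by hand.

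I expect the combinatorial size bound---rather than termination or distance preservation---to be the main obstacle, and within it two points need care. First, one must verify that a subtree pendant to the skeleton $S$ is genuinely a \emph{common} pendant subtree of both reduced trees; this rests on the fact that such a subtree sits entirely inside one component of the agreement forest and that, restricted to that component, $T_1'$ and $T_2'$ induce the same tree, and it is what links irreducibility under the subtree rule to a size cap. Second, one must pin down exactly which maximal families of consecutive skeleton-pendant subtrees constitute common chains, so that irreducibility under the chain rule limits their length; the bookkeeping near the deleted edges and near the degree-three branch vertices of $S$ is the most delicate part, and is where the precise value of the constant $28$ is forced. Everything else---the termination count, the composition of per-step distance equalities, and the reduction $k-1\le d$---follows routinely from the results already in hand.
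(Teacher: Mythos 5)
Your proposal is correct, and its first half (termination plus composing the per-step distance equalities, with Theorem~\ref{thm:chain-reduction} supplying the chain-reduction step) is exactly the paper's argument. Where you diverge is the size bound: the paper does not re-derive the kernel at all, but simply invokes Allen and Steel~\cite{allen01}, who already proved that a pair of trees irreducible under the subtree and chain rules has at most $28\dtbr{T_1,T_2}$ leaves, and then concludes with the single inequality $\dtbr{T_1,T_2} \le \dspr{T_1,T_2}$. You instead propose to rerun the Allen--Steel/Bordewich--Semple counting argument (MAF with $k$ components, skeleton spanning the deleted-edge endpoints, pendant subtrees and chains capped by irreducibility), which is the very argument behind the cited bound; indeed you end up deferring the delicate bookkeeping and the constant $28$ ``exactly as in the rooted case,'' so in effect you re-prove a citable result. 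This is not a gap---your sketch is the published proof in outline, and the two routes hinge on the same pivot from TBR to SPR via $k-1=\dtbr{T_1',T_2'}\le\dspr{T_1',T_2'}$---but the step you identify as ``the main obstacle'' is available off the shelf: the $28\dtbr{}$ leaf bound is a TBR statement that needs no adaptation to SPR, since the reduction rules and irreducibility are purely combinatorial conditions on the tree pair. Your route buys a self-contained exposition (and makes explicit where commonness of the pendant subtrees and chains must be checked); the paper's buys brevity, with the only new content of the corollary being the interleaving of Theorem~\ref{thm:chain-reduction} with the subtree reduction and the $\dtbr \le \dspr$ comparison.
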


\section{Conclusions}

We have worked to extend understanding of and methods to calculate the SPR distance between unrooted trees in several directions.
The maximum agreement forest framework used to prove properties of the rooted SPR distance and analyze algorithms for computing the distance can not be directly applied to the unrooted case.
Instead, we developed a more general representation called a socket agreement forest.
SAFs cannot determine the unrooted SPR distance, and instead are useful for determining notions of independent SPR moves and equivalences between SPR rearrangement sequences.
We used these ideas to prove the long-standing conjecture that the chain reduction rule preserves the SPR distance between unrooted trees.
Repeatedly interleaving the chain reduction rule and subtree reduction rule provides the first pillar of an efficient fixed-parameter algorithm---reducing the problem to a linear-size problem kernel.
This is a major improvement over the previous best quadratic-size problem kernel for this problem.
Chain reduction is a key subroutine of our new \textsf{uspr} software~\cite{uspr} for computing the unrooted SPR distance which can quickly compute distances up to 14 between trees with 50 leaves (manuscript in preparation).
Moreover, it is likely that our SAF framework will lead to new insights and algorithms for computing the unrooted SPR distance and related phylogenetic distances just as the MAF framework did for the rooted SPR distance.

\pagebreak

\appendix

\section*{Selected Proofs and Figures}

\begin{figure}[h]
	\hspace*{\stretch{1}}
	\subfigure[\label{fig:three-spr}]{\includegraphics[scale=1.2]{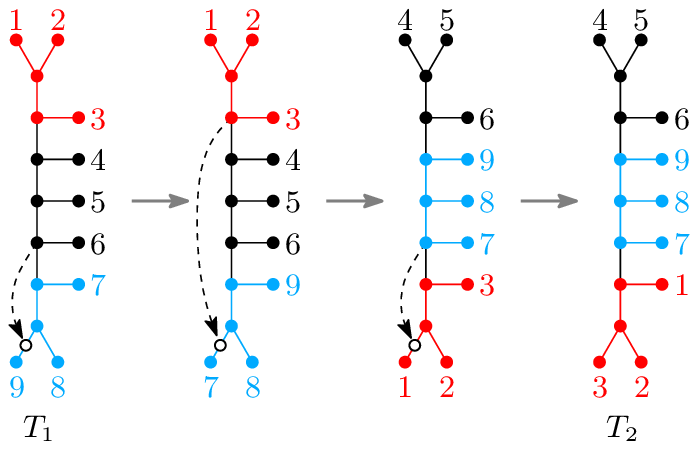}}
	\hspace*{\stretch{2}}
	\subfigure[\label{fig:maf}]{\includegraphics[scale=1.2]{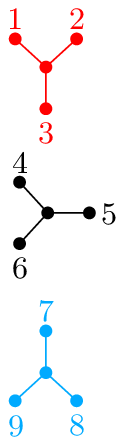}}
	\hspace*{\stretch{1}}

	\caption{(a) Three unrooted SPR operations transform a tree $T_1$ into another tree $T_2$.
		(b) A MAF of $T_1$ and $T_2$.
	}
	\label{fig:spr-distance}
\end{figure}

\moveleaf*
\begin{proof}
Consider an SPR move $m_i$ that breaks a connection $(u,v)$ and creates a connection $(u,v')$.
We observe that an SPR move must leave the singleton panel connected, and thus $u$ must be the socket in the singleton panel.
Performing any such move will not change acyclicity when done in any context, thus it does not prevent any subsequent moves.
This implies that an optimal SPR path does not break the singleton panel connection more than once, as one could simply remove the first such SPR move to obtain a shorter SPR path resulting in an equivalent tree.
	Then the sequences $m_i, m_1, m_2, \ldots, m_{i-1}, m_{i+1}, m_{i+2}, \ldots, m_k$ and $m_1, m_2, \ldots, m_{i-1}, m_{i+1}, m_{i+2},\allowbreak \ldots, m_k, m_i$ both result in $T_2$.
	Therefore $m_i$ is independent from each other move $m_j$ in the sequence.
\end{proof}

\modifynonterminal*
\begin{proof}
	Suppose, for the purpose of obtaining a contradiction, that the corollary is false: either some move $m'_q$ is not a valid SPR move or the sequence $M'$ does not result in an equivalent tree as the sequence $M$.

First, suppose that some move $m'_q$ is not a valid SPR move, i.e.\ suppose that some intermediate connection set does not form a tree in the sequence of socket agreement forests $t'_1, t'_2, \ldots, t'_k$ induced by applying $M'$ to $T_1$.
Let $q$ be the smallest corresponding index.
Then $m'_q$ cuts an edge $(x,y)$ between subtrees $T^x$ and $T^y$ of $t'_{q-1}$ and attempts to add an edge $(x, y')$ such that $y' \in T^x$.
Observe that $t_{q-1}$ and $t'_{q-1}$ differ only in the use of socket $v'$.
This implies that the path from $x$ to $y'$ in $t'_{q-1}$ contains an edge $(z, v')$.
However, $s'$ is in the same component of $F^*$ as $v$.
Then, $(z,v')$ and $(z,v)$ connect the same components of $F^*$ and, thus, the fact that move $m_q$ is valid implies that $m'_q$ is a valid SPR move.

Second, suppose that $t_k$ is not equivalent to $t'_k$.
We observe that $t'_{j-1}$ differs from $t_{j-1}$ only in the use of socket $v'$.
We have already shown that $M'$ is a valid sequence of SPR moves and we have that each $m'_r$ is equivalent to $m_r$, for all $j < r \le k$.
Thus, the fact that $m'_j$ is equivalent to $m_j$ implies that $t'_j$ is equivalent to $t_j$ and, moreover, each $t'_r$ is equivalent to $t_r$, for all $j < r \le k$, contradicting the supposition.
\end{proof}

\saf*
\begin{proof}
Let $F$ be a socket agreement forest of two trees $T_1$ and $T_2$.
Let $F^*$ be the AF underlying $F$.
Let $d$ be the length of an optimal SPR path from $T_1$ to $T_2$ with respect to $F^*$; such a path must exist as $F^*$ is an AF of $T_1$ and $T_2$.
To show the lemma, we will prove by induction on $d$ that $F$ permits an SPR path between $T_1$ and $T_2$ that is optimal with respect to $F^*$.

For the base case, suppose $d=1$.
Then $T_1$ and $T_2$ differ by a single SPR move, such that our path $P = T_1, T_2$.
The fact that $F$ is a socket agreement forest of $T_1$ and $T_2$ implies that $P$ is permitted.
This path is optimal with respect to $F^*$.

Now, suppose the claim holds for all $d^\circ < d$.
Let $t_1$ be a tree that is adjacent to $T_1$ on an optimal SPR path between $T_1$ and $T_2$ with respect to $F^*$.
If $F$ permits $t_1$, then, by the inductive hypothesis, there is an SPR path $P = t_1, t_2, \ldots, t_d$ (where $t_d = T_2$) such that $F$ permits $P$.
Thus, $P' = T_1, t_1, \ldots, t_d$ is an SPR path between $T_1$ and $T_2$ that is permitted by $F$ and optimal with respect to $F^*$, and the claim holds.

Now, assume that $F$ does not permit $t_1$.
Because $t_1$ is one SPR away from $T_1$, it is possible to add a socket $s$ to $F$ to obtain a forest $F'$ that permits $t_1$, is also underlain by $F^*$, and is optimal.
Then, by the inductive hypothesis, there is an SPR path $P = t_1, t_2, \ldots, t_d$ that is permitted by $F'$ and, hence, $F^*$.
Let $M = m_1, m_2, \ldots, m_d$ be the sequence of moves that induce $T_1, t_1, t_2, \ldots, t_d$.
We modify $M$ to obtain $M' = m'_1, m'_2, \ldots, m'_d$ by modifying each move $m_i$ that used $s$ to instead use a different socket $s'$ in the same component of $F^*$ (each such component must have another socket because $s$ was added to a component that already had a socket in it by definition of a socket forest).
Now, the fact that $T_2 = t_d$ is permitted by $F$ implies that $s$ is not connected after applying $M$ to $T_1$ to obtain $t_d$, that is each modified move is a non terminal move of $M$ and $M'$.
Let $P' = t'_0, t'_1, \ldots, t'_d$ be the SPR path induced by $M'$.
By Corollary~\ref{cor:modify-nonterminal}, this is a valid SPR path resulting in $t'_d = t_d$, which, along with the fact that $P'$ only uses sockets of $F$, proves the claim.

\end{proof}

\reducetrees*
\begin{proof}
	We first prove that interleaving the reduction rules results in a pair of trees with the same SPR distance.
	Allen and Steel~\cite{allen01} proved that repeated application of these rules to a pair of trees $T_1$ and $T_2$ results in a pair of trees $T_1'$ and $T_2'$ on the same leaf set such that neither rule is applicable.
	Allen and Steel~\cite{allen01} also proved that the subtree reduction preserves the SPR distance.
	In combination with Theorem~\ref{thm:chain-reduction}, this proves that $\dspr{T_1, T_2} = \dspr{T_1', T_2'}$.

	Now consider the size of the reduced trees $T_1'$ and $T_2'$.
	Allen and Steel~\cite{allen01} proved that these trees have at most $28\dtbr{T_1, T_2}$ leaves.
	The fact that $\dtbr{T_1, T_2} \le \dspr{T_1, T_2}$ proves that they also have at most $28\dspr{T_1, T_2}$ leaves.
\end{proof}

\pagebreak

\bibliographystyle{spmpsci}
\bibliography{uspr}

\end{document}